\newtheorem{theorem}{Theorem}
\newtheorem{corollary}{Corollary}
\newtheorem{lemma}{Lemma}
\newtheorem{remark}{Remark}
\newtheorem{definition}{Definition}
\newtheorem{assumption}{Assumption}
\newcommand{\EQ}{\begin{eqnarray}}
\newcommand{\EN}{\end{eqnarray}}
\newcommand{\EQQ}{\begin{eqnarray*}}
\newcommand{\ENN}{\end{eqnarray*}}
\DeclareMathAlphabet{\matheur}{U}{eur}{m}{n}
\DeclareMathAlphabet{\matheurb}{U}{eur}{b}{n}
\DeclareMathAlphabet{\matheus}{U}{eus}{m}{n}
\DeclareMathAlphabet{\matheuf}{U}{euf}{m}{n}
\renewcommand{\t}{^{\mbox{\tiny\sf T}}}
\newcommand{\red}{\color{red}}
\newcommand{\blue}{\color{blue}}    %%% added 2/11/09
\newcommand{\bremark}{\begin{remark}
\begin{rm}}
\newcommand{\eremark}{ \end{rm}\hfill \rule{1mm}{2mm}
\end{remark} }
\newcommand{\btheorem}{\begin{theorem} \begin{it}}
\newcommand{\etheorem}{\end{it} \hfill \rule{1mm}{2mm}
\end{theorem} }
\newcommand{\blemma}{\begin{lemma} \begin{it} }
\newcommand{\elemma}{ \end{it} \hfill\rule{1mm}{2mm}
\end{lemma} }
\newcommand{\bcorollary}{\begin{corollary} \begin{it} }
\newcommand{\ecorollary}{ \end{it} \hfill\rule{1mm}{2mm}
\end{corollary} }
\newcommand{\bdefinition}{\begin{definition} }
\newcommand{\edefinition}{  \end{definition} }
\newcommand{\bproposition}{\begin{proposition} }
\newcommand{\eproposition}{\hfill \rule{1mm}{2mm}
\end{proposition} }
\newcommand{\bexample}{\begin{example} \begin{rm}}
\newcommand{\eexample}{ \end{rm} \hfill\rule{1mm}{2mm}
\end{example} }
\newcommand{\bassumption}{\begin{assumption} }
\newcommand{\eassumption}{
\end{assumption} }
\newcommand{\balgorithm}{\medskip\begin{algorithm} \rm}
\newcommand{\ealgorithm}{ \hfill \rule{1mm}{2mm}\medskip
\end{algorithm} }
\newcommand{\barray}{\begin{array} }
\newcommand{\earray}{ \end{array}}
\newcommand{\basm}{\begin{assumption} \begin{rm} }
\newcommand{\easm}{ \end{rm} \hfill\rule{1mm}{2mm}
\end{assumption} }
\renewcommand{\t}{^{\mbox{\tiny\sf T}}}
\begin{document}

\title{Collective Dynamics and Control for Multiple Unmanned Surface Vessels}
\author{ Bin Liu, Zhiyong Chen, \textit{Senior Member}, \textit{IEEE}, Hai-Tao Zhang, \textit{Senior Member}, \textit{IEEE}, Xudong Wang, Tao Geng, Housheng Su, and Jin Zhao, \textit{Senior Member}, \textit{IEEE}
\thanks{This work was supported by the National Natural Science Foundation of China (NNSFC) under Grants with Nos.~ U1713203, 51729501 and 61673189, in part by the Guangdong Innovative and Entrepreneurial
Research Team Program under Grant 2014ZT05G304. (Corresponding author: H.-T. Zhang)

B. Liu, X. Wang and T. Geng are with the Guangdong HUST Industrial Technology Research
Institute, Guangdong Province Key Lab of Digital Manufacturing Equipment, Dongguan 523808, China, emails: binliu92@hust.edu.cn, wangxd2016@hust.edu.cn, dr.geng@aliyun.com.

 Z. Chen is with School of Electrical Engineering and Computing,
The University of Newcastle, Callaghan, NSW 2308, Australia, email: zhiyong.chen@newcastle.edu.au.

H.-T. Zhang is with the School of Automation, the State Key Lab of Digital Manufacturing Equipment
and Technology, and the Key Lab of Imaging Processing and Intelligence
Control, Huazhong University of Science and Technology, Wuhan 430074,
China, email: zht@mail.hust.edu.cn.

H. Su and J. Zhao are with the School of Automation, and the Key Lab of Imaging Processing and Intelligence
Control, Huazhong University of Science and Technology, Wuhan 430074,
China, emails: houshengsu@gmail.com, jinzhao617@hust.edu.cn.
}
}

\maketitle
\begin{abstract}
A multi-unmanned surface vessel (USV) formation control system is established on a novel platform
composed of three 1.2 meter-long hydraulic jet propulsion surface vessels, a differential GPS reference station, and inter-vessel Zigbee communication modules. The system is also equipped with
an upper level collective multi-USV protocol and a lower level vessel dynamics controller.
The system is capable of chasing and surrounding a target vessel. The results are supported by rigorous theoretical analysis  in terms of
asymptotical surrounding behavior and  trajectory regulation.
Extensive experiments are conducted to demonstrate the effectiveness and efficiency of the proposed
hardware and software architectures.

\end{abstract}

\begin{IEEEkeywords}
Multi-agent systems, unmanned surface vessels, collective control, regulation,  underactuated control.
\end{IEEEkeywords}

\IEEEpeerreviewmaketitle

\section{Introduction}\label{introduction}
%More than 71\% area of the earth surface is covered by oceans, most of which is still far from fully explored.  With the huge consumption of land resources such as oil, natural gas and coal, more attention has been paid to exploration of marine resources.

Unmanned surface vessels (USVs) have extensive applications in marine resource exploration, water pollution clearance, disaster searching and rescue,
marine patrol and prospection, for their low-cost, high efficiency, agility and flexibility.
Most existing research on USVs focuses on a single vessel. As representative works, recurrent neural network-based predictive controllers were designed in \cite{zh12,yan12} to address the nonlinearity of the USV dynamics. Trajectory tracking controllers were proposed for path planning of USVs subject to input saturation, system uncertainties, and wind/wave disturbances in \cite{kd04,kd16,huang15,ch17,so10}.

%Therein, inter-vessel collision is avoided as well. From the environment perception point for view, sight guidance was adopted in surface vessel models \cite{wa16} as well to improve the trajectory regulation performance.

With the tremendous development over the past years, multi-USV systems have become indispensable tools for
developing marine economic, protecting marine environment, and preserving marine rights.
In particular,  a single USV is far less capable than a multi-USV formation, especially in fulfilling complex tasks of patrol, rescue, smuggle seizing, water pollution clearance, and material delivery.  For example, in harsh marine environments with severe external disturbances, a single USV is more vulnerable than a multi-USV setting where one malfunctioned USV can be replaced and/or rescued by another.

In the field of formation control of multiple unmanned vehicle/robot/vessel, called a multi-agent system (MAS) in general,
these years have witnessed many research outcomes, including $\alpha$-lattice flocking in \cite{olf06}, a second-order Cucker-Smale model in \cite{cu07} and its prediction version in \cite{zhang16,ch15}, homogeneous  and heterogeneous collective circular motion control protocols in \cite{ce08,ch11b,ch13c}, an arbitrary collective closed envelope motion control scheme in \cite{zh16}, and {\blue formation control protocols for Euler-Lagrangian systems in \cite{nu14}.}
More results can be referred to in the survey papers, e.g., \cite{kw15,kn16}.

Especially on formation control of multi-USVs,  the representative works are discussed as follows.
A sliding-mode formation control scheme was designed in \cite{fa07}  for USVs to form arbitrary formations.
A coordinative control protocol governing a multi-USV system was developed in \cite{dong08}
 to a desirable stationary formation with identical orientations.
%A formation control scheme of underactuated ships under sea loads was developed in \cite{kd16}
%that fulfills a collision-free synchronization motion tracking task.
Formation control of  USVs in the presence of uncertainties and ocean disturbances was studied in \cite{pe13}.
Based on a fuzzy estimator, a distributed constrained control law was proposed in \cite{peng17}
for multiple USVs guided by a virtual leader moving along a parameterized path.
A smooth time-varying distributed control law was proposed in \cite{xie17} that assures that
a multiple USV can globally exponentially converge to a desirable geometric formation.
{\blue The objective of this paper is to drive a team of vessels to surround a target vessel within their convex hull, which is
different from the aforementioned formation control. A relevant theoretical work can found in \cite{ch10}
where the vehicles are initially placed within a
circle and/or using a predefined stand-off distance between the vehicles and the target.
A novel  kinematic control scheme is proposed in this paper that does not require such an initial setup.  }

Most of the aforementioned works focus on formation control protocols of kinematic models,
but not taking complicated surface vessel
dynamics into consideration.
It is of great theoretical challenge to consider the complicated interaction of an upper level collective multi-USV protocol and a lower level vessel dynamics controller. Specifically, this paper answers how to achieve the upper level collective behavior subject to the regulation error from the lower level controller, as well as how to drive the regulation error to zero exponentially for a specified trajectory from the upper level.

Also, theoretical research has rarely been tested in real environment
due to the challenging practical issues in establishment of a real experimental platform.
Rare relevant results can be found in \cite{seo15,guo14,wil17,yu11b} where
experiments were conducted on real water surfaces including rivers, lakes and seas.
These works however focus on a single USV.
In this paper, we aim to test the design in a real lake-based multi-USV formation control platform
that is composed of three 1.2 meter-long jet-propelling vessels equipped with onboard differential GPS receivers and imaging processing modules,
located at Songshan Lake, Guangdong, China.

\section{Modeling}
\label{se: modeling}

%The main objective of this paper is to study the collective dynamics and control for multiple USVs.
%First of all, their kinematics and dynamics are discussed in this section.

Consider a multi-USV system consisting of $N\geq 3$ vessels. Let
$\mathbb N=\{1,2,\cdots, N\}$. Denote
the complete position distribution of the system as
$x=[x_1\t,\dots, x_N\t]\t$, where $x_i=[x_{1,i},x_{2,i}]\t\in \mathbb R^2$, $i\in \mathbb N$,
represents the Cartesian coordinates of  the $i$-th vessel.
Denote ${\rm co}(x)$ be the convex hull of $x_1$, $\cdots$, $x_N$, that is,
$${\rm co}(x):= \left\{ \sum_{i=1}^{N} \lambda_i x_i:\; \lambda_i \geq 0, \forall i \;{ \rm and }
\sum_{i=1}^N \lambda_i = 1 \right\}.$$
Also, let
\EQ P_{x_o} (x) :=\min_{s \in {\rm co}(x) } \|x_o - s \|\label{eq: Pxo}\EN
be the distance between a point $x_o$  and ${\rm co}(x)$. Obviously,
$x_o\in{\rm co}(x)$ if and only if $P_{x_o} (x)  =0$.

The kinematics model for each vessel is given as follow
\EQ
\dot x_{i}=S(\psi_i) \left[\begin{array}{c} w_i  \\
v_i \end{array}\right] , \;i \in {\mathbb N}
 \label{eq: multi-USV}
\EN
for a rotation matrix
\EQ
S(a) := \left[\begin{array}{cc} \cos a & -\sin a  \\
\sin a & \cos a
\end{array}\right], \; a\in \mathbb R.
\EN
In the model,  $w_i$, $\psi_i$ and  $v_i$ represent the forward (surge) velocity, the orientational (yaw) angle and the transverse (sway) velocity, respectively, as illustrated in Fig.~\ref{fig: boatdirection}.
Denote the orientational (yaw) angular velocity $r_i, i \in {\mathbb N}$.

\begin{figure}[htbp]
\centering
\includegraphics[width=8cm]{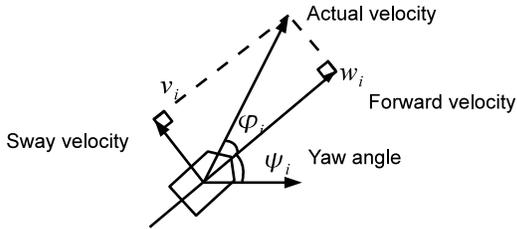}
\caption{Illustration of kinematics model of a vessel.}
\label{fig: boatdirection}
\end{figure}

A complete but complicated nonlinear dynamics model has been proposed in literature based on physical
principles with the simplified hydrodynamic effects; see, e.g., Eqs.~(4-6) of
\cite{ch13e}.  The model was identified with the nominal forward speed up to  20~knots.
Therein, several simplified linear variants of the dynamic equations and some
control design approaches are also discussed for trajectory tracking including cascaded PD and backstepping control.
With the same objective ``to obtain a model that is rich enough to enable effective motion planning and control, simple enough for experimental identification, and general enough to describe a variety of vehicles operating over a large range of speeds,'' we use the following equations for the dynamics of vessels used in the paper, for $i \in {\mathbb N}$,
\EQ
\label{dmodel}
\barray{rll}
\dot\psi_i &=& r_i, \\
\dot{w_i}&=&k_1w_i+k_2v_ir_i+k_3\tau_{i,1},\\
%\dot{v_i}&=&k_4v_i+k_5\tau_{i,2},\\
\dot{r_i}&=&k_4r_i+k_5\tau_{i,2},\\
\blue \dot{v_i}&\blue =&\blue k_6v_i+k_7w_ir_i,
\earray
\EN
where the two control variables are the propeller speed $\tau_{i,1}$ and the steering angle $\tau_{i,2}$.
Denote $\tau_i = [\tau_{i,1}, \tau_{i,2}]\t$.
In particular, this model is given for the vessels working in a medium speed mode  (1-3m/s).
In this model, we ignore the high-order nonlinearities except the cross nonlinearity $v_ir_i$ {\blue and $w_ir_i$} in the {\blue second and fourth} equations of (\ref{dmodel}). This simplification is based on extensive experiments and data matching.

\section{Problem Formulation and Controller Framework}\label{sec: problem}

The main technical challenge in multi-USV system control is
to propose a decentralized protocol that achieves the specified collaborative behavior
through the control to each vessel's dynamics model.
Some preliminary manipulation is first introduced in this section.

Let $w_i^r$, $v_i^r$, and $\psi_i^r$ be the desired signals for $w_i$, $v_i$, and
$\psi_i$, respectively, in the kinematics model  (\ref{eq: multi-USV}).
Denote
\EQ
\barray{rll}
\widetilde w_i &:=& w_i - w_i^r, \\
\widetilde v_i  &:=& v_i - v_i^r, \\
\widetilde \psi_i  &:=& \psi_i - \psi_i^r.
\earray
\EN
Direct calculation shows that
\EQ \label{kmodel}
\dot x_{i}
%&=&S( \psi_i^r +\widetilde \psi_i ) \left[\begin{array}{c} w_i^r +\widetilde w_i  \\
% v_i^r +\widetilde v_i  \end{array}\right]  \\
 = S( \psi_i^r )\left[\begin{array}{c} w_i^r
\\
 v_i^r   \end{array}\right]
 + e_i
\EN
with
\EQ
e_i = [S( \psi_i^r +\widetilde \psi_i ) -S( \psi_i^r )]  \left[\begin{array}{c} w_i^r
\\
 v_i^r   \end{array}\right]
 +S(\psi)\left[\begin{array}{c}  \widetilde w_i
\\
 \widetilde v_i  \end{array}\right]. \label{ei}
 \EN

The control design framework of this paper consists of the following two steps.
\begin{itemize}

\item[(i)] {\it (\textbf{Upper level collective protocol}) To design the desired $w_i^r$, $v_i^r$, and $\psi_i^r$, for the
kinematics model (\ref{kmodel}) such that the multi-USV achieves
a desired collective behavior, subject to the perturbation $e_i(t)$ approaching zero.}
\item[(ii)] {\it (\textbf{Lower level vessel dynamics control}) To design the actuator input $\tau_i$ for the dynamics model (\ref{dmodel})
such that $w_i$, $v_i$, and $\psi_i$ achieve
the desired $w_i^r$, $v_i^r$, and $\psi_i^r$ given in (i), in particular, with
 $e_i(t)$ approaching zero.}

\end{itemize}

The technical objective of this paper is to propose solutions to the two steps.
A direct conclusion is as follows, with the two steps in the aforementioned
framework solved, the closed-loop system composed of   (\ref{dmodel}),
(\ref{kmodel}) and the actuator input $\tau_i$,
achieves the desired collective behavior specified in step (i).
To be more specific, two collective behaviors, i.e., surrounding control and equally surrounding control, are studied in this paper,
 The rigorous definitions are given below.

%
%\begin{figure}[htbp]
%\centering
%\includegraphics[width=7cm]{chasing1.eps}
%\caption{Illustration of multi-USV chasing procedures.}
%\label{fig: boatchase}
%\end{figure}

\bdefinition  A target vessel position $x_o\in\mathbb R^2$ is {\it asymptotically surrounded} by
the $N$ vessels of the complete position distribution $x$ if
\EQ
\label{eq: chasemodel}
\barray{rll}
\lim_{t\rightarrow\infty} P_{x_o}(x(t)) =0.
\earray
\EN
\edefinition

\bdefinition
 A target vessel position $x_o\in\mathbb R^2$ is {\it asymptotically equally surrounded} by
the $N$ vessels of the complete position distribution $x$ if
it is asymptotically surrounded by them with
 \EQ
\label{eq: chasemodel1}
\barray{rll}
&&\lim_{t\rightarrow\infty} \|x_i(t) -x_o\| =\rho_o, \; i \in {\mathbb N}\\
&&\lim_{t\rightarrow\infty}\|x_i(t)-x_j(t)\| \geq d, \; i \neq j \in {\mathbb N}
\earray
\EN
for $\rho_o > 0$ and $d=2\rho_o\sin(\pi/N)$.
\edefinition

\begin{remark}
For every two adjacent vessels, say $\hbar$ and $\ell$,
the property (\ref{eq: chasemodel1}), together with the geometric constraints,
implies
$\lim_{t\rightarrow\infty}\|x_\hbar(t)-x_\ell(t)\| = d$.

\end{remark}

We propose two approaches in Cartesian coordinate
and polar coordinate, respectively, to achieve different collective
behaviors with different features.

\medskip

{\bf Approach 1:} For any signal $u_i^r \in {\mathbb R}^2$ to be designed {\blue and an arbitrary
$v_i^r <\|u_i^r\|$,}  let\footnote{For a vector $x=[x_1, x_2]\t  \in {\mathbb R}^2$, let $\angle x \in [0, 2\pi)$ be
the angle of the complex number $x_1 + i x_2$ in the complex plan.
}
% \EN
{ \blue
 \EQ
\barray{rll}
 w_i^r &=& \sqrt{\|u_i^r\|^2-(v_i^r)^2},  \\
 \psi_i^r &=&  2\kappa \pi + \angle u_i^r -\mbox{atan} ({v_i^r}/{w_i^r}),  \\
  \earray \label{conv-a}
 \EN
 }where {\blue $\mbox{atan} ({v_i^r}/{w_i^r})$ is the drift angle,} $\kappa$ is an integer-valued signal such that $\kappa(0)=0$, i.e.,
$\psi_i^r(0)=  \angle u_i^r (0)$ and a continuous $\angle u_i^r(t)$
implies a continuous $\psi_i^r(t)$ in time $t$.
%\EQQ
%u^r_i =
%\left[\begin{array}{c} \cos \angle u_i^r
%\\
% \sin \angle u_i^r   \end{array}\right]   \|u_i^r\|
% =
%\left[\begin{array}{c} \cos   \psi_i^r
%\\
%  \sin  \psi_i^r  \end{array}\right]   w_i^r  =
% S( \psi_i^r )\left[\begin{array}{c} w_i^r
%\\
% v_i^r   \end{array}\right].
% \ENN

Accordingly, one has that
 {\blue
 \EQQ
u^r_i =
\left[\begin{array}{c} \cos \angle u_i^r
\\
 \sin \angle u_i^r   \end{array}\right]   \|u_i^r\|  =
 S( \psi_i^r )\left[\begin{array}{c} w_i^r
\\
 v_i^r   \end{array}\right].
 \ENN
 }
 Then, the model (\ref{kmodel}) becomes
 \EQ   \label{kmodel-a}
 \dot x_i = u_i^r+e_i.
 \EN

Obviously, the aforementioned step (i) is solvable with
{\blue an arbitrary  $v_i^r <\|u_i^r\|$} and the desired $w_i^r$ and $\psi_i^r$
 given by (\ref{conv-a}) if the following step is solvable.

 \begin{itemize}

\item[(i$'$)] {\it To design a desired $u_i^r$ for the
kinematics model (\ref{kmodel-a}) such that the multi-USV achieves
asymptotically surrounding formation, subject to the perturbation $e_i(t)$ approaching zero.}

\end{itemize}

\medskip

{\bf Approach 2:}  For a specified target vessel position $x_o\in\mathbb R^2$,
let
  \EQQ
  \barray{rll}
  \rho_i &:=& \|x_i -x_o\|,\\
 \theta_i &:=& 2 \kappa \pi + \angle (x_i-x_o)
 \earray
 \ENN
be the polar coordinate of the $i$-th vessel
where $\kappa$ is an integer-valued signal such that $\kappa(0)=0$, i.e.,
$\theta_i(0)=  \angle x_i (0)$ and a continuous $\angle x_i(t)$
implies a continuous $\theta_i(t)$ in time $t$.

For any signals $\eta_i^r,  \omega_i^r \in {\mathbb R}$, let
\EQ \label{conv-b}
u_i^r = S(\theta_i)\left[\begin{array}{c} \eta_i^r
\\
\rho_i \omega_i^r  \end{array}\right]  \EN
and hence  $w_i^r$, $v_i^r$, and $\psi_i^r$
 given in Approach 1.
Define
\EQQ  \left[\begin{array}{c} \widetilde\eta_i^r
\\
 \widetilde\omega_i^r  \end{array}\right]:=
\left[ \barray{cc}1& 0 \\
 0 & 1/\rho_i \earray \right]S^{-1}(\theta_i)  e_i,
\ENN
i.e.,
 \EQQ e_i = S(\theta_i) \left[\begin{array}{c} \widetilde\eta_i^r
\\
\rho_i \widetilde\omega_i^r  \end{array}\right].
\ENN
Note the following calculation
 \EQQ \dot x_i = u_i^r+e_i =S(\theta_i)\left[\begin{array}{c} \eta_i^r
\\
\rho_i \omega_i^r  \end{array}\right]+S(\theta_i) \left[\begin{array}{c} \widetilde\eta_i^r
\\
\rho_i \widetilde\omega_i^r  \end{array}\right] \\
=S(\theta_i)\left[\begin{array}{c} \eta_i^r   +\widetilde\eta_i^r
\\
\rho_i (\omega_i^r +\widetilde \omega_i ^r)\end{array}\right], \ENN
and
\EQQ \dot x_i = \dot \rho_i \left[\barray{c}\cos \theta_i \\
\sin \theta_i\earray\right] +\rho_i
\left[\barray{c} -\sin \theta_i \\
\cos \theta_i\earray\right]\dot\theta_i
=S(\theta_i) \left[\begin{array}{c} \dot\rho_i \\
\rho_i \theta_i \end{array}\right].
\ENN
Then, the model (\ref{kmodel}) becomes
\EQ \label{kmodel-b}
 \dot \rho_i =\eta_i^r   +\widetilde\eta_i^r,\;
\dot\theta_i = \omega_i^r +\widetilde \omega_i^r.
\EN
Also, it is noted that ${\blue \widetilde\eta^r_i(t)}$ approaches zero if
$e_i(t)$ approaches zero;
 ${\blue \widetilde\omega^r_i(t)}$ approaches zero if
$e_i(t)$ approaches zero and $\rho_i(t)$ is asymptotically lower bounded by a positive constant.

Obviously, the aforementioned step (i) is solvable with the desired $w_i^r$, $v_i^r$, and $\psi_i^r$
 given by (\ref{conv-a}) and (\ref{conv-b})  if the following step is solvable.

 \begin{itemize}

\item[(i$''$)] {\it To design desired $\eta_i^r$ and $\omega_i^r$ for the
kinematics model (\ref{kmodel-b}) such that
$\rho_i(t)$ is asymptotically lower bounded by a positive constant
subject to the perturbation ${\blue \widetilde\eta^r_i(t)}$  approaching zero; and
the multi-USV achieves
equally asymptotically surrounding formation subject to the perturbation ${\blue \widetilde\eta^r_i(t)}$ and ${\blue \widetilde\omega^r_i(t)}$
 approaching zero.}

\end{itemize}

In what follows, we aim to  propose solutions to the steps (i$'$) and (i$''$) in Section~\ref{sec:collective},
and afterwards the step (ii) in Section~\ref{sec:reg}.

%The controller diagrams for the aforementioned two approaches
%are shown in Fig.~\ref{fig: controllerframe}.

%\begin{figure}[htbp]
%\centering
%\includegraphics[width=9.5cm]{structure.eps}
%\caption{Controller structure of vessel~$i$ with vessel~$j\in\mathcal \mathcal N_i$.}
%\label{fig: controllerframe}
%\end{figure}

\section{Collective Control Design}
\label{sec:collective}

This section aims to propose a controller for each vessel so that the multi-USV achieves
the desired asymptotically surrounding formation in the sense given in (i$'$) or (i$''$).

\subsection{Asymptotically  Surrounding Control}

The main objective of this subsection is described in step (i$'$). More specifically, it aims
to design the desired $u_i^r$ for the
kinematics model (\ref{kmodel-a}) such that a specified target vessel position $x_o$ (may be an enemy vessel)
is asymptotically surrounded by the USV team, subject to the perturbation $e_i(t)$ approaching zero.

To give the desired $u_i^r$ in a distributed manner, we define the set of neighbors of vessel $i$
as $$\mathcal N_i :=\left\{ j\in {\mathbb N} : j\neq i,  \; |\|x_i-x_j\|<\mu \right\},\;  i\in {\mathbb N} $$ with a specified distance $\mu >0$.
First, assume that $x_o$ is available for all vessels,  then the control law for each follower is designed as follows,
with $x_{ij} :=x_i -x_j$ and $x_{oi}:={\blue x_o}-x_i$ throughout the paper,
\EQ
\label{eq: controlui}
 u_i^r=  \gamma_1 \sum_{j\in \mathcal N_i} (\mu^2 -\|x_{ij}\|^2) x_{ij}
 +\gamma_2 x_{oi}.
\EN

 Now, the main technical result is stated in the following theorem.

\begin{theorem}
\label{theo: surround1}
For the system (\ref{kmodel-a}) with $\lim_{t\rightarrow\infty}e_i(t)=0$ exponentially
and the controller  (\ref{eq: controlui}) with $\gamma_1, \gamma_2 >0$,
the states of the closed-loop system are bounded. Moreover, the target vessel position $x_o$ is  asymptotically surrounded by
the $N$ vessels in the sense of (\ref{eq: chasemodel}).
 \end{theorem}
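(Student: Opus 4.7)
The plan is to build a Lyapunov function whose negative gradient produces $u_i^r$, then treat $e_i$ as an exponentially-vanishing disturbance and close the argument with Barbalat's lemma together with the antisymmetry of the pairwise interaction.

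First, I would introduce the $C^1$ cutoff potential $\Psi(r) := \tfrac{1}{2}(\mu^2 - r^2)^2$ for $r\le\mu$ and $\Psi(r):=0$ otherwise, and set
\[
V(x) := \frac{\gamma_1}{4}\sum_{i=1}^N \sum_{j \in \mathcal N_i} \Psi(\|x_{ij}\|) + \frac{\gamma_2}{2}\sum_{i=1}^N \|x_i - x_o\|^2.
\]
Because $\Psi'(\mu)=0$, the cutoff does not spoil smoothness, and, using the symmetry $j\in\mathcal N_i \Leftrightarrow i\in\mathcal N_j$, one checks directly that $\nabla_{x_i}V = -u_i^r$. Differentiating $V$ along (\ref{kmodel-a}) gives $\dot V = -\sum_i\|u_i^r\|^2 - \sum_i u_i^r\cdot e_i$, and Young's inequality yields $\dot V \le -\tfrac{1}{2}\sum_i\|u_i^r\|^2 + \tfrac{1}{2}\sum_i\|e_i\|^2$. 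Since $e_i$ is exponentially decaying, it lies in $L^2$; integrating therefore shows that $V$ is uniformly bounded and that $u^r \in L^2$. Boundedness of $V$ forces $\|x_i-x_o\|$ to be bounded, which settles the first claim of the theorem.

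To finish, I would invoke Barbalat's lemma. On the resulting bounded set $u_i^r$ is a Lipschitz function of $x$ and $\dot x = u^r + e$ is bounded, so $u_i^r(t)$ is uniformly continuous; together with $u^r\in L^2$ this forces $u_i^r(t)\to 0$ for every $i$. Summing the controller over $i$ exploits a Newton's-third-law cancellation: the pairwise repulsion term is antisymmetric under swapping $(i,j)$, so $\sum_i u_i^r = \gamma_2 N(x_o - \bar x)$ with $\bar x := \tfrac{1}{N}\sum_i x_i$. Consequently $\bar x(t)\to x_o$, and since $\bar x \in \mathrm{co}(x)$, the distance satisfies $P_{x_o}(x(t)) \le \|\bar x(t)-x_o\| \to 0$, which is exactly (\ref{eq: chasemodel}).

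The principal obstacle I anticipate is the switching of $\mathcal N_i$ as inter-vessel distances cross the threshold $\mu$. The chosen cutoff is precisely tailored to handle this: the fact that $\Psi$ and $\Psi'$ both vanish at $r=\mu$ makes $V$ globally $C^1$ across the switching surface, so the gradient identity, the dissipation estimate, and the Lipschitz bound needed by Barbalat all survive without requiring a separate nonsmooth or hybrid analysis.
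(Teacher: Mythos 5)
Your argument is correct, and its first half (the $C^1$ cutoff potential, the gradient identity $\nabla_{x_i}V=-u_i^r$, the Young-inequality dissipation bound, and the $L^2$ integrability of $e_i$) mirrors the paper's proof essentially verbatim up to harmless constant rescalings. Where you diverge is the convergence step. The paper never invokes Barbalat's lemma here: it simply sums the closed-loop equations over $i$, uses the same antisymmetry cancellation you exploit, and observes that the centroid obeys the exponentially stable linear ODE $\dot{\bar x} = -\gamma_2(\bar x - x_o) + \bar e$ with a vanishing input, from which $\bar x \to x_o$ is immediate; the boundedness result is not even needed for this step. Your route instead establishes $u^r \in L^2$, upgrades it to $u_i^r(t)\to 0$ via uniform continuity (which does require the boundedness from part one, plus the observation that the cutoff interaction term is Lipschitz in $x$ even though $\mathcal N_i$ switches), and only then applies the summation identity to extract $\bar x\to x_o$. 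Both are valid; the paper's version is shorter and more elementary, while yours yields the strictly stronger byproduct that every individual velocity command $u_i^r$ tends to zero, not merely their sum --- information the theorem does not require but which could be useful for characterizing the limiting configuration. Your closing observation that $P_{x_o}(x)\le\|\bar x - x_o\|$ because $\bar x\in\mathrm{co}(x)$ is exactly the (implicit) final step of the paper's proof as well.
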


\begin{proof} The closed-loop system composed of
 (\ref{kmodel-a}) and (\ref{eq: controlui})
can be put in the following form
 \EQ   \label{kmodel-a-cl}\barray{rll}
  \dot x_i &=&u_i^r+e_i\\
 &=&\gamma_1 \sum_{j\in \mathcal N_i} (\mu^2 -\|x_{ij}\|^2) x_{ij}
 +\gamma_2 x_{oi}  +e_i.\earray
 \EN

Let
\EQQ
V_o (x_{ij})=\left\{ \begin{array}{ll} (\|x_{ij}\|^2-\mu^2)^2 & \|x_{ij}\| < \mu \\
0 & \|x_{ij}\|\geq\mu\end{array}
\right.
\ENN that is continuously differentiable and whose
derivative is 0 for  $\|x_{ij}\| \geq \mu$ and
\EQQ
\dot {V}_o(x_{ij})=\frac{\partial V_o(x_{ij})}{\partial x_{ij}} \left[ \frac{\partial x_{ij}}{\partial x_i}\dot x_i
+ \frac{\partial x_{ij}}{\partial x_j}\dot x_j  \right]\\
=4(\|x_{ij}\|^2-\mu^2)   x_{ij}\t \dot x_{ij}
\ENN
 for $ \|x_{ij}\| < \mu $.
Let
\EQQ
V_1(x) =\frac{\gamma_1}{4} \sum_{i, j\in {\mathbb N}, j\neq i}V_o(x_{ij})
\ENN
whose derivative is, due to the symmetric property of the undirected graph,
\EQQ
\dot V_1(x)
%=\gamma_1 \sum_{i \in {\mathbb N}, j\in \mathcal N_i} (\|x_{ij}\|^2-\mu^2)   x_{ij}\t \dot x_{ij} \\
=2\gamma_1 \sum_{i \in {\mathbb N}, j\in \mathcal N_i} (\|x_{ij}\|^2-\mu^2)   x_{ij}\t \dot x_{i} .
\ENN
Let
\EQQ V_2(x) = \gamma_2\sum_{i\in \mathbb N}\|x_{oi}\|^2.\ENN
Analogously, one has
\EQQ \dot V_2(x) = 2\gamma_2\sum_{i\in \mathbb N}x_{oi}\t \dot x_i.
\ENN
The derivative of $V(x) = V_1(x)+V_2(x)$, along the trajectory of (\ref{kmodel-a-cl}), is
\EQQ
\dot V(x) = 2 \sum_{i\in \mathbb N} \left[ \gamma_1 \sum_{ j\in \mathcal N_i} (\|x_{ij}\|^2-\mu^2)   x_{ij}\t
+2\gamma_2 x_{oi}\t   \right] \t \dot x_{i}\\
=- 2 \sum_{i\in \mathbb N} (u_i^r)\t (u_i^r +e_i)
\leq  -\sum_{i\in \mathbb N} \|u_i^r\|^2  + \sum_{i\in \mathbb N}\|e_i\|^2.
\ENN
Denote
\EQQ
U(t) = -\sum_{i\in \mathbb N}  \int_0^{t} \|u_i^r(s)\|^2 ds \leq 0.
\ENN
Direct calculation gives
\EQQ 0\leq V(x(t))  \leq U(t)  + \sum_{i\in \mathbb N} \int_{0}^t \|e_i(s)\|^2 ds + V(x(0)).
\ENN
As a result, $V(x)$ is upper bounded, so is the state $\|x(t)\|$.

%Also, it is noted that  $U(t) $ is lower bounded and monotonic, so $U(t)$ has a finite limit as $t\rightarrow\infty$.
%Together with that fact that
%$\ddot U(t)$ is bounded, it implies
%$\lim_{t\rightarrow\infty}\dot U(t)=0$ and hence
%$\lim_{t\rightarrow\infty}u_i^r (t)=0$,
%by Barbalat's lemma \cite{kh02}.

To prove the moreover part, let $\bar x =\sum_{i=1}^{N} x_i /N$, and $\bar e =\sum_{i=1}^{N} e_i /N$.
 Then,
 \EQQ
 \dot {\bar x} &=&\frac{ \gamma_1}{N} \sum_{i=1}^{N}
  \sum_{j\in \mathcal N_i} (\mu^2 -\|x_{ij}\|^2) x_{ij}
 \\&&+\frac{ \gamma_2}{N} \sum_{i=1}^{N}   (x_{oi})   +
 \frac{1}{N} \sum_{i=1}^{N} e_i
  \\&=&  - \gamma_2 \bar x  + \gamma_2 x_o + \bar e,
 \ENN
that implies
$\lim_{t\rightarrow\infty}\bar x(t) - x_o =0$
and hence  (\ref{eq: chasemodel}). The proof is thus completed.
 \end{proof}

\medskip

 Next, we will investigate the decentralized scenario that $x_o$ is not available for all the vessels. In such a situation, a decentralized estimator is required for each follower vessel to estimate $x_o$.
Define ${\mathbb N}_{1}$ as the set of vessels that can detect the target (i.e., leaders) and ${\mathbb N}_{2}$ as the set of vessels that cannot (i.e., followers). Let $\mathcal M_i$ be the set of communication neighbors of vessel $i$, $\forall i\in \mathbb N$.
  The estimator is designed as follows:
\begin{equation}
\label{eq: estimator}
\dot{y}_i=
\left\{
\begin{array}{ll}
\gamma_3[\sum_{j\in \mathcal M_i}(y_j-y_i)+({\blue x_o}-y_i)],& i\in {\mathbb N}_{1}\\
\gamma_3\sum_{j\in \mathcal M_i}(y_j-y_i),& i\in {\mathbb N}_{2}
\end{array}
\right.
\end{equation}
where $y_i$ is the estimate of $x_o$ for vessel $i$.
As a result, the controller (\ref{eq: controlui}) is modified as follows
 \EQ
\label{eq: controlui_est}
 u_i^r=  \gamma_1 \sum_{j\in \mathcal N_i} (\mu^2 -\|x_{ij}\|^2) x_{ij}
 +\gamma_2 (y_i-x_i).
\EN
A similar statement still holds as in the following theorem.

\begin{theorem}
For the system (\ref{kmodel-a}) with $\lim_{t\rightarrow\infty}e_i(t)=0$ exponentially
and the controller (\ref{eq: estimator}) and (\ref{eq: controlui_est}) with $\gamma_1, \gamma_2,\gamma_3 >0$,
the states of the closed-loop system are bounded if the network determined by $\mathcal M_i$, $\forall i\in \mathbb N$,
is connected and the target can be detected by at least one vessel (i.e., ${\mathbb N}_{1}\neq \emptyset$).
Moreover, the target vessel position $x_o$ is  asymptotically surrounded by
the $N$ vessels in the sense of (\ref{eq: chasemodel}).
\end{theorem}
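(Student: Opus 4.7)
The plan is to reduce this theorem to Theorem~\ref{theo: surround1} by first establishing that the estimator (\ref{eq: estimator}) drives every $y_i$ to $x_o$ exponentially \emph{independently of the vessel motion}, and then treating the estimation gap as an additional exponentially vanishing perturbation that can be absorbed into $e_i$.

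First I would stack the estimation errors $\widetilde y_i := y_i - x_o$ into a single vector $\widetilde y$ and observe that, since $x_o$ is constant,
\[
\dot{\widetilde y} = -\gamma_3\,\bigl[(L + D)\otimes I_2\bigr]\widetilde y,
\]
where $L$ is the Laplacian of the communication graph determined by $\{\mathcal M_i\}_{i\in \mathbb N}$ and $D$ is the diagonal $\{0,1\}$-matrix with $D_{ii}=1$ iff $i\in \mathbb N_{1}$. Under the stated hypotheses (connected graph and $\mathbb N_1\neq\emptyset$) the pinned Laplacian $L+D$ is symmetric positive definite, a standard fact in leader--follower consensus. Hence $\widetilde y(t)\to 0$ exponentially at rate $\gamma_3\lambda_{\min}(L+D)$, uniformly in $x$.

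Next I would rewrite the perturbed controller (\ref{eq: controlui_est}) as
\[
u_i^r = \gamma_1\sum_{j\in \mathcal N_i}(\mu^2-\|x_{ij}\|^2)\,x_{ij} + \gamma_2 x_{oi} + \gamma_2 \widetilde y_i,
\]
so that the closed loop of (\ref{kmodel-a}) becomes exactly (\ref{kmodel-a-cl}) with $e_i$ replaced by the augmented perturbation $\bar e_i := e_i + \gamma_2 \widetilde y_i$, which still vanishes exponentially by the previous paragraph and by hypothesis. I would then replay the Lyapunov argument of Theorem~\ref{theo: surround1} verbatim with $V = V_1 + V_2$: the inequality
\[
0 \le V(x(t)) \le U(t) + \sum_{i\in\mathbb N}\int_{0}^{t}\|\bar e_i(s)\|^2\,ds + V(x(0))
\]
still holds, and since $\int_0^\infty \|\bar e_i(s)\|^2\,ds < \infty$, both $V$ and $U$ stay bounded, giving boundedness of $x(t)$; $y(t)$ is bounded trivially via $\widetilde y$.

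For the moreover part I would again average: writing $\bar x := N^{-1}\sum_{i}x_i$, the pairwise interaction terms cancel by antisymmetry of $x_{ij}$, yielding
\[
\dot{\bar x} = -\gamma_2 \bar x + \gamma_2 x_o + N^{-1}\sum_{i\in \mathbb N}\bar e_i,
\]
a stable scalar linear system driven by an exponentially decaying input, so $\bar x(t)\to x_o$ and consequently $P_{x_o}(x(t))\to 0$, establishing (\ref{eq: chasemodel}). The main obstacle I anticipate is the first step: cleanly decoupling the estimator dynamics from the vessel motion and verifying positive-definiteness of the pinned Laplacian $L+D$ under the stated network assumptions; once this is secured, the rest follows as an immediate perturbation argument on top of Theorem~\ref{theo: surround1}.
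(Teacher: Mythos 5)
Your proposal is correct and follows essentially the same route as the paper: establish exponential convergence of the estimator under the connectivity and pinning assumptions, absorb the resulting estimation error into an augmented perturbation $\bar e_i = e_i + \gamma_2(y_i - x_o)$, and then invoke the proof of Theorem~\ref{theo: surround1} verbatim. The only difference is that you spell out the pinned-Laplacian argument for the estimator's exponential convergence, which the paper simply asserts as a standard fact.
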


\begin{proof}
The network (\ref{eq: estimator}) is able to achieve $\lim_{t\rightarrow\infty}\|y_i-x_o\|=0$, $\forall i\in \mathbb N$, exponentially,
if the network determined by $\mathcal M_i$, $\forall i\in \mathbb N$,
is connected and the target can be detected by at least one vessel.
 Let $\epsilon_i = y_i-x_o$. One has $\lim_{t\rightarrow\infty} \epsilon_i(t)=0$, exponentially.

The closed-loop system composed of
(\ref{kmodel-a})   and (\ref{eq: controlui_est})
can be rewritten as
 \EQ   \label{kmodel-a-cl-est}
 \dot x_i =\gamma_1 \sum_{j\in \mathcal N_i} (\mu^2 -\|x_{ij}\|^2) x_{ij}
 +\gamma_2 x_{oi} +\bar e_i
\EN
 for $\bar e_i=  \gamma_2 \epsilon_i +e_i$. Clearly,
  $\lim_{t\rightarrow\infty}\bar e_i(t)=0$ exponentially.
So, the system (\ref{kmodel-a-cl-est}) takes the same form
of (\ref{kmodel-a-cl}). The proof follows that of Theorem~\ref{theo: surround1}.
\end{proof}

%\bremark The exponential convergence of $e_i(t)$ is guaranteed by the lower level single vessel control in Section~\ref{sec:reg}.
%\eremark

\subsection{Asymptotically Equally Surrounding Control}

The main objective of this subsection is described in steps (i$''$). More specifically, it aims
to design the desired  $\eta_i^r$ and $\omega_i^r$ for the
kinematics model (\ref{kmodel-b}) such that a specified target vessel position $x_o$ is
asymptotically equally surrounded by the multi-USV, subject to the perturbation ${\blue \widetilde\eta^r_i(t)}$ and ${\blue \widetilde\omega^r_i(t)}$
 approaching zero.
In (\ref{kmodel-b}),   $\rho_i$ and $\theta_i$ are the radius and the moving angle of vessel $i$, respectively, relative to the
target $x_o$.
Define
$\theta_{ij} :=\theta_i -\theta_j +2\kappa\pi \in [-\pi, \pi)$ where
$\kappa$ is an integer-valued signal.

 The main purpose of $\eta_i^r$ is to drive all vessels to a circle of a specified radius $\rho>0$.
 The controller for $\eta^r_i$ takes the following linear structure
\EQ
\label{controleta}
\eta^r_i:=\beta_1(\rho_o-\rho_i).
\EN
To put the desired $\omega_i^r$ in a distributed manner, we define the set of neighbors of vessel $i$
as
\EQQ \Theta_i =\left\{ j\in {\mathbb N} : j\neq i,  \; |\theta_{ij}| < \frac{2\pi}{N} \right\},\;  i\in {\mathbb N}. \ENN
Then, $\omega_i^r$ is designed such that the angles of the vessels change along the negative gradient of
an energy function $P_o(\theta_{ij})$ to be specified later, that is,
\begin{equation}
\label{eq: controlomega}
\omega_i^r=
\beta_2 \sum_{j\in \Theta_i} \left(\frac{2\pi}{N}-|\theta_{ij}|\right) \frac{\theta_{ij}}{|\theta_{ij}|}.
\end{equation}

\begin{theorem} \label{the: theta}
Consider the system (\ref{kmodel-b})
and the controller (\ref{controleta}) and (\ref{eq: controlomega}) with $\beta_1, \beta_2 >0$.
Then,
\EQ \label{limrho}
\lim_{t\rightarrow\infty} \rho_i(t) =\rho_o, \; i \in {\mathbb N}\EN
if $\lim_{t\rightarrow\infty}{\blue \widetilde\eta^r_i(t)}=0$ exponentially. Moreover,
 the target vessel position $x_o$ is asymptotically equally surrounded by
the $N$ vessels in the sense of (\ref{eq: chasemodel1}), or equivalently, (\ref{limrho}). Furthermore,
 \EQ
\lim_{t\rightarrow\infty} |\theta_{ij}(t)| \geq 2\pi/N , \; i \neq j \in {\mathbb N}, \label{limtheta}
\EN
if $\lim_{t\rightarrow\infty}{\blue \widetilde\omega_i(t)}=0$ exponentially.
 \end{theorem}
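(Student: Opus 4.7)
The statement decomposes naturally into three stages matching its three sub-claims.

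Stage 1 (radius convergence).  Substituting \eqref{controleta} into \eqref{kmodel-b} gives $\dot\rho_i=-\beta_1(\rho_i-\rho_o)+\widetilde\eta_i^r$, a scalar exponentially stable linear system driven by an exponentially vanishing input.  Variation of constants yields \eqref{limrho} at exponential rate; in particular $\rho_i(t)$ is asymptotically bounded below by a positive constant, which is precisely the side condition flagged in step (i$''$) that promotes $e_i\to 0$ to $\widetilde\omega_i^r\to 0$ exponentially, making the second hypothesis of the theorem consistent with the first.

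Stage 2 (angular Lyapunov analysis).  Mirroring the barrier function $V_o$ used in the proof of Theorem~\ref{theo: surround1}, I would define $P_o(\alpha):=\max\{0,\;2\pi/N-|\alpha|\}^2$ and the aggregate $V(\theta):=(\beta_2/4)\sum_{i\neq j\in\mathbb N}P_o(\theta_{ij})$.  Since $P_o$ is even, $P_o'$ is odd, so swapping the indices $i\leftrightarrow j$ collapses the two contributions from $\dot\theta_{ij}=\dot\theta_i-\dot\theta_j$ into a single diagonal term, giving $\dot V=-\sum_i\omega_i^r\dot\theta_i=-\sum_i(\omega_i^r)^2-\sum_i\omega_i^r\widetilde\omega_i^r$.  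Young's inequality then yields $\dot V\le-\tfrac12\sum_i(\omega_i^r)^2+\tfrac12\sum_i(\widetilde\omega_i^r)^2$, in exact analogy with the estimate used in Theorem~\ref{theo: surround1}.  Since $\widetilde\omega_i^r$ decays exponentially, its $L^2$-norm is finite, so $V(\theta(t))$ is uniformly bounded and $\omega_i^r\in L^2[0,\infty)$.

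Stage 3 (angular separation and assembly).  Boundedness of $V$ keeps each $\theta_{ij}$ in a compact subset of $(-\pi,\pi)$ on which $\dot\theta_i$ is uniformly continuous; Barbalat's lemma combined with $\omega_i^r\in L^2$ then forces $\omega_i^r(t)\to 0$.  Since $V$ is monotone up to an integrable perturbation it admits a limit $V_\infty\ge 0$, and a LaSalle/asymptotically-autonomous-systems argument excludes $V_\infty>0$, delivering $P_o(\theta_{ij}(t))\to 0$ and hence \eqref{limtheta}.  Combining Stage~1 with \eqref{limtheta} through the law of cosines $\|x_i-x_j\|^2=\rho_i^2+\rho_j^2-2\rho_i\rho_j\cos\theta_{ij}$ yields $\liminf_{t\to\infty}\|x_i(t)-x_j(t)\|\ge 2\rho_o\sin(\pi/N)=d$, so the equally-surrounding condition \eqref{eq: chasemodel1} of Definition~2 follows.

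The delicate step is the LaSalle-style argument in Stage 3: spurious configurations in which the sector-wise angular gradient contributions cancel without full angular separation must be ruled out as possible $\omega$-limit sets.  In the unperturbed gradient flow such configurations are unstable saddles, so a robustness-to-vanishing-perturbation argument is needed; alternatively the conclusion may have to be stated modulo a basin-of-attraction restriction on the initial angular spread.  A secondary but routine issue is the non-smoothness of $P_o$ at $\theta_{ij}=0$, easily side-stepped by choosing $V(\theta(0))<(2\pi/N)^2$ so that $V$-boundedness keeps the vessels from colliding in angle along the trajectory.
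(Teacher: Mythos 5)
Your proposal follows essentially the same route as the paper: the radius claim is dispatched by linear-system stability, and the angular claim uses the same truncated quadratic potential $P_o$, the same estimate $\dot P \le -\sum_{i}(\omega_i^r)^2 + \sum_{i}(\widetilde\omega_i^r)^2$ obtained from the antisymmetry of $\theta_{ij}$, and Barbalat's lemma to conclude $\omega_i^r(t)\to 0$. The ``delicate step'' you flag at the end --- passing from $\omega_i^r\to 0$ to $|\theta_{ij}|\to 2\pi/N$ despite possible cancellation among the signed sector terms in (\ref{eq: controlomega}) --- is precisely the point the paper settles with a one-sentence assertion rather than any LaSalle or saddle-instability argument, so your caveat identifies a terseness (arguably a gap) in the published proof rather than a divergence from its method.
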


 \begin{proof}  The closed-loop system can be rewritten as
 \EQ \label{kmodel-b-cl}
\barray{rll}
\dot \rho_i &=&\beta_1(\rho_o-\rho_i)  +{\blue \widetilde\eta^r_i},\\
\dot\theta_i &=& \beta_2 \sum_{j\in \Theta_i} (\frac{2\pi}{N}-|\theta_{ij}|) \frac{\theta_{ij}}{|\theta_{ij}|}+{\blue \widetilde \omega^r_i}.
\earray
\EN
The proof of (\ref{limrho}) is straightforward from the linear system property.
To prove the phase distribution property (\ref{limtheta}), we define a potential function $P_o(\theta_{ij})$ as follows
\begin{equation}
\label{eq: potentialrij}
P_o(\theta_{ij}):=
\left\{
\begin{array}{lll}
\frac{\beta_2}{2}(|\theta_{ij}|-\frac{2\pi}{N})^2,&|\theta_{ij}| < \frac{2\pi}{N}\\
0,&|\theta_{ij}|\geq \frac{2\pi}{N}
\end{array}
\right.
\end{equation}
that is continuously differentiable and whose
derivative is 0 for  $|\theta_{ij}|\geq \frac{2\pi}{N}$ and

\EQQ
\dot P_o(\theta_{ij}) %&=&\frac{\partial P_o(\theta_{ij})}{\partial \theta_{ij}} \dot\theta_{ij} \\
&=&\frac{\partial P_o(\theta_{ij})}{\partial \theta_{ij}}
\left[\frac{\partial \theta_{ij}}{\partial \theta_i} \dot\theta_i +
\frac{\partial \theta_{ij}}{\partial \theta_j} \dot\theta_j  \right] \\
&=&\beta_2 \left(|\theta_{ij}|-\frac{2\pi}{N}\right)\frac{\theta_{ij}}{|\theta_{ij}|}\dot{\theta}_{ij}
\ENN
 for $|\theta_{ij}| < \frac{2\pi}{N}$.
 Let
 \EQQ
P(\theta) =  \sum_{i, j\in {\mathbb N}, j\neq i}P_o(\theta_{ij})
\ENN
whose derivative is, due to symmetric property of the undirected graph,
\EQQ\barray{rll}
\dot P(\theta) &=&\sum_{i, j\in {\mathbb N}, j\neq i}\beta_2 (|\theta_{ij}|-\frac{2\pi}{N})\frac{\theta_{ij}}{|\theta_{ij}|}\dot{\theta}_{ij} \\
&=&2\sum_{i\in {\mathbb N}, j\in \Theta_i}\beta_2 (|\theta_{ij}|-\frac{2\pi}{N})\frac{\theta_{ij}}{|\theta_{ij}|}\dot{\theta}_{i}\\
&=&-2\sum_{i\in {\mathbb N}} \omega^r_i \dot{\theta}_{i}
\\&=&-2\sum_{i\in {\mathbb N}} \omega^r_i (\beta_2 \sum_{j\in \Theta_i} (\frac{2\pi}{N}-|\theta_{ij}|) \frac{\theta_{ij}}{|\theta_{ij}|}+\widetilde \omega_i)
\\ &\leq&  -\sum_{i\in \mathbb N} \|\omega_i^r\|^2  +\sum_{i\in \mathbb N} \|\widetilde \omega_i\|^2.\earray
\ENN
Denote
\EQQ
\Omega(t) := -\sum_{i\in \mathbb N}  \int_0^{t} \|\omega_i^r(s)\|^2 ds \leq 0.
\ENN
Direct calculation gives
\EQQ 0\leq P(\theta(t))  \leq \Omega(t)  + \sum_{i\in \mathbb N} \int\textit{}_{0}^t \|\widetilde\omega_i(s)\|^2 ds + P(\theta(0)).
\ENN
It is noted that  $\Omega(t) $ is lower bounded and monotonic, so $\Omega(t)$ has a finite limit as $t\rightarrow\infty$.
Together with the fact that
$\ddot \Omega(t)$ is bounded, it implies
$\lim_{t\rightarrow\infty}\dot \Omega(t)=0$ and hence
$\lim_{t\rightarrow\infty}\omega_i^r (t)=0$,
by Barbalat's lemma \cite{kh02}.
{\blue From (\ref{eq: controlomega}), one has either $\lim_{t\rightarrow\infty}|\theta_{ij}(t)| = 2\pi/N$
for $j\in \Theta_i$ or $\lim_{t\rightarrow\infty}|\theta_{ij}(t)| \geq 2\pi/N$ for $j\notin \Theta_i$ and $j\neq i$.}
The  property (\ref{limtheta}) thus holds, and the proof is thus completed.
\end{proof}

{\blue\bremark
In the controller (\ref{eq: controlui}) or (\ref{eq: controlui_est}),  the term $(\mu^2 -\|x_{ij}\|^2) x_{ij}$ gives the
repulsive velocity between two vessels.
In (\ref{eq: controlomega}), the term $\left({2\pi}/{N}-|\theta_{ij}|\right) {\theta_{ij}}/{|\theta_{ij}|}$ gives the
repulsive angular velocity between two vessels.
In particular, the closer are the two vessels, the larger is the repulsive velocity.
It provides a mechanism for collision avoidance among the follower vessels.
However, rigorous collision avoidance analysis is an interesting topic for future research.  \eremark}

\section{Trajectory Regulation}
\label{sec:reg}

In this section, we will solve the problem formulated in step~(ii), that is,
to design the actuator input $\tau_i$ for the dynamics model (\ref{dmodel})
such that $w_i$, $v_i$, and $\psi_i$ achieve
the desired $w_i^r$, $v_i^r$, and $\psi_i^r$, respectively.
{\blue Note that (\ref{dmodel}) is an under-actuated system.
The states $w_i$ and $\psi_i$ can be controlled through $\tau_i$
to achieve the desired $w_i^r$ and $\psi_i^r$ as elaborated in Theorem~\ref{thm-regu},
while $v_i$ cannot be directly controlled. Fortunately, the desired
$v_i^r$ can be arbitrarily selected as explained in Section~III
as long as $v_i^r <\|u_i^r\|$. Therefore, we can trivially set $v_i^r =v_i $
that automatically includes regulation of $v_i$ to $v_i^r$.
In the scenario investigated in this paper,  the sway velocity $v_i$ is typically small,
which makes $v_i^r <\|u_i^r\|$ hold in general.  In practice, if
a large $v_i$ occurs in a rare situation, the vessel can be intervened
to reduce its sway velocity.
}
%{\red Note that, since the sway velocity $v_i$ is typically small, matching the desired
%$v_i^r=0$,
%we ignore the dynamics of $v_i$ for simplicity.(Should be deleted)}

\begin{theorem} \label{thm-regu} For sufficiently smooth desired signals  $w_i^r(t)$ and $\psi_i^r(t)$,
pick a  sufficiently smooth signal
\EQ \varpi_i(t) \geq \max\{1, |w_i^r(t) |\}.\label{eq: varpi}\EN
Define a lumped reference signal
\EQQ
\zeta_i :=[\varpi_i, \dot \varpi_i,\ddot \varpi_i, \psi_i^r, \dot \psi_i^r, \ddot \psi_i^r]\t.
\ENN
For the system (\ref{dmodel}), consider the actuator input $\tau_{i,1}$ as follows
 \EQ
 \barray{rll}
\dot \eta_i &=& -\kappa_1 \widetilde w_i + \ddot w_i^r,\\
\tau_{i,1} &=&(- k_1w_i - k_2v_ir_i  +\eta_i -\kappa_2 \widetilde w_i) / k_3, \label{tau1}
\earray
\EN
for $\widetilde w_i := w_i - w_i^r $
and some positive control parameters $\kappa_1$ and $\kappa_2$
satisfying  $\kappa_1 > \frac{1}{4}\kappa_2^2$;
consider the actuator input $\tau_{i,2}$ as follows
 \EQ
  \barray{rll}
\tau_{i,2} &=& g(r_i,  \psi_i,\zeta_i)\\
&=&
[ k_4 r_i \varpi_i
+2 r_i \dot  \varpi_i
 - \ddot \psi_i^r \varpi_i
 -2 \dot \psi_i^r \dot\varpi_i
  + \widetilde \psi_i  \ddot\varpi_i   \\
&& -\kappa_3^2 \widetilde \psi_i \varpi_i  + (\kappa_3
+\kappa_4)  \widetilde r_i ]/(-k_5 \varpi_i )
 \earray \label{tau2}
\EN
for
\EQQ \widetilde \psi_i  &=& \psi_i - \psi_i^r,\\
\widetilde r_i   &=& r_i \varpi_i  - \dot \psi_i^r \varpi_i + \widetilde \psi_i \dot\varpi_i + \kappa_3 \widetilde \psi_i \varpi_i
\ENN
and some positive control parameters $\kappa_3$ and $\kappa_4$.
Then,  the states $w_i$ and $\psi_i$ achieve
the desired $w_i^r$ and $\psi_i^r$,  in particular, with
 $e_i(t)$ in (\ref{ei}) approaching zero exponentially if $v_i^r =v_i$ is bounded.
 \end{theorem}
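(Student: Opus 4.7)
The plan is to exploit the decoupled control structure: $\tau_{i,1}$ in (\ref{tau1}) enters only the surge dynamics and $\tau_{i,2}$ in (\ref{tau2}) enters only the yaw dynamics, so I can establish exponential convergence of $\widetilde w_i$ and $\widetilde \psi_i$ separately and then fold these into the expression (\ref{ei}) for $e_i$ using the Lipschitz property of $S(\cdot)$ and the boundedness of $v_i^r=v_i$ and $w_i^r$.

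For the surge channel, substituting (\ref{tau1}) into the second line of (\ref{dmodel}) cancels $k_1 w_i$ and $k_2 v_i r_i$ and leaves $\dot w_i = \eta_i - \kappa_2 \widetilde w_i$. Introducing the shifted integrator state $\xi_i := \eta_i - \dot w_i^r$, the error dynamics become the time-invariant linear system
\begin{align*}
\dot{\widetilde w}_i &= -\kappa_2 \widetilde w_i + \xi_i,\\
\dot \xi_i &= -\kappa_1 \widetilde w_i.
\end{align*}
Its characteristic polynomial $\lambda^2 + \kappa_2 \lambda + \kappa_1$ is Hurwitz for any $\kappa_1,\kappa_2>0$; the hypothesis $\kappa_1 > \kappa_2^2/4$ only moves the eigenvalues off the real axis, which is convenient for a clean quadratic Lyapunov certificate but not strictly required. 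In either case $\widetilde w_i \to 0$ exponentially.

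For the yaw channel I would use a backstepping-style change of variables: set $z_i := \widetilde \psi_i \varpi_i$ and observe that the definition of $\widetilde r_i$ in the theorem is precisely $\widetilde r_i = \dot z_i + \kappa_3 z_i$. The key step, and the most tedious piece of the proof, is to differentiate $\widetilde r_i$ along (\ref{dmodel}), substitute $\tau_{i,2}$ from (\ref{tau2}), and regroup the terms in $r_i$, $\dot\psi_i^r$, $\widetilde\psi_i$, $\varpi_i$ and its derivatives so that the identity
\[
\kappa_3 (r_i - \dot \psi_i^r)\varpi_i + \kappa_3 \widetilde \psi_i \dot \varpi_i = \kappa_3 \widetilde r_i - \kappa_3^2 \widetilde \psi_i \varpi_i
\]
surfaces. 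Once the pieces $k_4 r_i \varpi_i$, $2 r_i \dot \varpi_i$, $\ddot \psi_i^r \varpi_i$, $2\dot \psi_i^r \dot \varpi_i$, $\widetilde \psi_i \ddot \varpi_i$ and $\kappa_3^2 \widetilde\psi_i\varpi_i$ match their counterparts in (\ref{tau2}), one is left with $\dot{\widetilde r}_i = -\kappa_4 \widetilde r_i$, so $\widetilde r_i \to 0$ exponentially. The uniform lower bound $\varpi_i \geq 1$ then plays a double role: it makes $\tau_{i,2}$ well defined (the formula contains a $1/\varpi_i$ factor) and it turns the first-order filter $\dot z_i + \kappa_3 z_i = \widetilde r_i$ into an ISS channel from which $z_i$, and hence $\widetilde \psi_i = z_i/\varpi_i$, inherits exponential decay; along the way $r_i$ stays bounded since $r_i = \widetilde r_i/\varpi_i + \dot\psi_i^r - \widetilde\psi_i(\dot\varpi_i/\varpi_i + \kappa_3)$.

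Finally, I would assemble the conclusion on $e_i$. With $v_i^r:=v_i$ one has $\widetilde v_i \equiv 0$, so (\ref{ei}) simplifies to $e_i = [S(\psi_i^r + \widetilde\psi_i) - S(\psi_i^r)]\,[w_i^r,\,v_i]\t + S(\psi_i)\,[\widetilde w_i,\,0]\t$. Using $\|S(a+b)-S(a)\| \leq |b|$ together with the assumed boundedness of $w_i^r$ and of $v_i$, the first term is bounded by a constant multiple of $|\widetilde\psi_i|$ and the second by $|\widetilde w_i|$; both decay exponentially, so $e_i(t)\to 0$ exponentially. The principal obstacle throughout is the yaw backstepping calculation, which is conceptually clean but sensitive to sign and bookkeeping mistakes because of the time-varying scaling by $\varpi_i$ and because every canceling term in $\tau_{i,2}$ must be tracked against exactly one term in $\dot{\widetilde r}_i$.
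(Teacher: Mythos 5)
Your proposal follows essentially the same route as the paper's proof: the same shifted-integrator error system for the surge channel, the same backstepping variable $z_i=\widetilde\psi_i\varpi_i$ (the paper's $\varphi_i$) with the target identity $\dot{\widetilde r}_i=-\kappa_4\widetilde r_i$, and the same Lipschitz estimate on $S(\cdot)$ together with $\widetilde v_i\equiv 0$ to convert the two exponential decays into $e_i(t)\to 0$. The only noteworthy divergence is in the final estimate, where you invoke boundedness of $w_i^r$ to control the term $|\widetilde\psi_i|\,|w_i^r|$ while the paper uses the built-in inequality $|w_i^r|\le\varpi_i$ to bound that term directly by $|\varphi_i|$ (which is precisely why $\varpi_i$ is required to dominate $|w_i^r|$, so no separate boundedness assumption is needed); your side remark that $\kappa_1>\kappa_2^2/4$ is not actually necessary for the surge error system to be Hurwitz is also correct.
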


\begin{proof}
The $w_i$-dynamics and the controller (\ref{tau1}) can be put in the following
form, with $\widetilde w_i = w_i - w_i^r$ and $\widetilde \eta_i =\eta_i -\dot w_i^r$,
 \EQ
\label{tildaeta}
\barray{rll}
\dot{ \widetilde \eta} &=& -\kappa_1 \widetilde w_i , \\
\dot{\widetilde w}_i &=&-\kappa_2 \widetilde w_i +\widetilde\eta_i,
\earray
\EN
which is exponentially stable when $ \kappa_1 > \frac{1}{4}\kappa_2^2$.

For $\widetilde \psi_i  = \psi_i - \psi_i^r $, one has
\EQQ \dot {\widetilde \psi}_i = r_i - \dot \psi_i^r.
\ENN
Let $\varphi_i = \widetilde \psi_i \varpi_i$. Then,
\EQ  \label{tildevarphi}
\barray{rll}
\dot \varphi_i &=& (r_i - \dot \psi_i^r) \varpi_i + \widetilde \psi_i \dot\varpi_i,  \\
&=& -\kappa_3 \varphi_i  +\widetilde r_i
\earray
\EN where
\EQQ
\widetilde r_i
= r_i \varpi_i  - \dot \psi_i^r \varpi_i + \widetilde \psi_i \dot\varpi_i + \kappa_3 \widetilde \psi_i \varpi_i\\
 = (r_i - \dot \psi_i^r) \varpi_i + \widetilde \psi_i \dot\varpi_i + \kappa_3 \varphi_i .
\ENN
Direct calculation gives
\EQQ
\dot {\widetilde r}_i
% =\dot r_i \varpi_i
%+r_i \dot  \varpi_i
% - \ddot \psi_i^r \varpi_i
% - \dot \psi_i^r \dot\varpi_i \\
% +\dot { \widetilde \psi}_i \dot\varpi_i + \widetilde \psi_i \ddot\varpi_i
% + \kappa_3 \dot\varphi_i \\
  =k_4 r_i \varpi_i  +k_5 \tau_{i,2} \varpi_i
+r_i \dot  \varpi_i
 - \ddot \psi_i^r \varpi_i
 - \dot \psi_i^r \dot\varpi_i \\
 +\dot { \widetilde \psi}_i \dot\varpi_i + \widetilde \psi_i \ddot\varpi_i
 + \kappa_3 \dot\varphi_i .\ENN
Noting that
\EQQ
\tau_{i,2}   = g(r_i,  \psi_i,\zeta_i)
=[ k_4 r_i \varpi_i
+r_i \dot  \varpi_i
 - \ddot \psi_i^r \varpi_i
 - \dot \psi_i^r \dot\varpi_i \\
 +\dot { \widetilde \psi}_i \dot\varpi_i + \widetilde \psi_i \ddot\varpi_i
 + \kappa_3 \dot\varphi_i
+\kappa_4  \widetilde r_i ] /(-k_5 \varpi_i ),
 \ENN
one has
\EQ \label{tilder}
\dot {\widetilde r}_i
 = -\kappa_4  \widetilde r_i.  \EN

From (\ref{tildaeta}), (\ref{tildevarphi}), and (\ref{tilder}),  one has
\EQQ
\lim_{t\rightarrow\infty} \widetilde w_i(t) =0,\;
\lim_{t\rightarrow\infty}   \varphi_i(t) =0,\;
\lim_{t\rightarrow\infty} \widetilde   r_i(t) =0
\ENN
exponentially. {\blue Furthermore, it follows from (\ref{eq: varpi}) and (\ref{tildevarphi}) that $\lim_{t\rightarrow\infty} \widetilde \psi_i(t) =0$ exponentially.}
%With~(\ref{eq: varpi}), it can be verified that
%\EQQ
%\|e_i\|  \leq  \left\| \frac{\partial S( s )}{\partial s} \right\| |\widetilde \psi_i| |w_i^r |
% + \left\| S(\psi)\right\| |\widetilde w_i| \\
%= \left\| \frac{\partial S( s )}{\partial s} \right\| |\varphi_i|
% + \left\| S(\psi)\right\| |\widetilde w_i|
% \ENN
{\blue
It can be verified that
\EQQ
\|e_i\|  \leq  \left\| \frac{\partial S( s )}{\partial s} \right\|
\left\|
\left[
\begin{array}{cc}
|\widetilde \psi_i| |w_i^r| \\
|\widetilde \psi_i||v_i^r|
\end{array}
\right]
\right\|
 + \left\| S(\psi)\right\|
|\widetilde w_i | \\
 \leq  \left\| \frac{\partial S( s )}{\partial s} \right\|
\left\|
\left[
\begin{array}{cc}
| \varphi_i |  \\
|\widetilde \psi_i||v_i^r|
\end{array}
\right]
\right\|
 + \left\| S(\psi)\right\|
|\widetilde w_i |
\ENN
for some $s$ between $\psi_i^r$ and $\psi_i^r+\widetilde\psi_i$, noting $\tilde v_i=0$.
Since the norms of the rotation matrix $S$ and its derivative, i.e.,
$\left\| S(\psi)\right\|$ and $\left\| {\partial S( s )}/{\partial s} \right\| $,
are always bounded,  $|v_i^r|$ is bounded, and
$\lim_{t\rightarrow\infty}  | \varphi_i(t)| =0$,
$\lim_{t\rightarrow\infty}  | \tilde\psi_i(t)| =0$,
$\lim_{t\rightarrow\infty}  | \tilde\omega_i(t)| =0$,
one has
$ \lim_{t\rightarrow\infty}   e_i(t) =0$, exponentially.} The proof is thus completed.
\end{proof}

\begin{remark}
When the vessels work in a low frequency motion scenario with
the desired state trajectories $w_i^r(t)$ and $\psi_i^r(t)$ varying slowly
and $w_i^r(t)$ bounded by a constant $\varpi_i= \varpi_o$, one can simplify the controllers
by approximately using
$\dot w_i^r=0,\ddot w_i^r=0, \dot \varpi_i=0,\ddot \varpi_i=0, \dot \psi_i^r=0, \ddot \psi_i^r =0$.
Then, the controller (\ref{tau1}) reduces to
 \EQ
 \barray{rll}
\tau_{i,1} &=&- (k_1/ k_3 )w_i - (k_2/ k_3)v_ir_i   \\
&& -(\kappa_1/ k_3)  \int  \widetilde w_i(s) ds   -(\kappa_2/ k_3) \widetilde w_i , \label{PID1}
\earray
\EN
and the controller (\ref{tau2}) to
  \EQ
\tau_{i,2} =
- \frac{ k_4       + \kappa_3
+\kappa_4   }{ k_5  } \dot\psi_i  + \frac{
 \kappa_3^2     - (\kappa_3
+\kappa_4) \kappa_3   }{k_5 }\widetilde \psi_i . \label{PID2}
\label{eq: u2}\EN
{\blue Obviously, the simplified controller $\tau_{i,1}$ is of the Proportional-Integral (PI)  form with proper feedforward compensation
and $\tau_{i,2}$ is of the Proportional-Derivative (PD) form.}
\end{remark}

\section{Experiments}\label{se: experiments}

The multi-USV experimental platform  is composed of three HUST-12 vessels,  a target vessel, a differential NovAtel-OEM615-typed GPS station, and a wireless LAN hub as shown in Fig.~{\ref{fig: multiusvs}}.
A HUST-12 USV  has a  plastic hull, an onboard differential GPS  receiver, a wireless sensor, an embedded controller,  a motor driver, a waterjet propellor, and a rudder. In particular, the waterjet propellor is composed of a propellor body, a reversing bucket, a jetting nozzle, and a motor.  The USV parameters are listed in Table~\ref{lab: usvparameters}.

\begin{figure}[htbp]
\centering
\includegraphics[width=8.5cm]{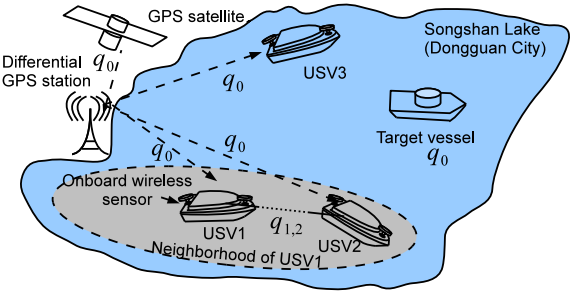}
\caption{Illustration of the multi-USV chasing system.}
\label{fig: multiusvs}
\end{figure}

%
%\begin{figure}[htbp]
%\centering
%\includegraphics[width=8cm]{driver.eps}
%\caption{The structure of the HUST-12 USV.}
%\label{fig: equipment}
%\end{figure}

%\begin{figure}[htbp]
%\centering
%\includegraphics[width=9cm]{driver.eps}
%\caption{The structure of drive system. Consist of a waterjet propulsion, a reversing bucket, a motor and a nozzle.}
%\label{fig: driver}
%\end{figure}

%
%\begin{figure}[htbp]
%\centering
%\includegraphics[width=7cm]{controlstation.eps}
%\caption{The structure of a differential GPS station.}
%\label{fig: controlstation}
%\end{figure}

\begin{table}
\caption{The physical characteristics parameters of HUST-12}
\begin{tabular*}{9cm}{llll}
\hline
Parameters & Value\\
\hline
Overall length  &  1.40m \\
Length on the waterline & 1.26m \\
Overall beam & 0.45m \\
Beam on the waterline & 0.39m  \\
Depth in water & 0.1m \\
Area of the waterplane & $0.44m^2$ \\
Longitudinal center of gravity to bow & 0.8m \\
Mass & 25kg \\
Maximal forward speed & 10m/s  \\
Maximal astern speed & 0.3m/s \\
Minimal turning radius & 1.1m \\
Cornering speed & 65deg/s \\
\hline
\label{lab: usvparameters}
\end{tabular*}\vspace{-5mm}
\end{table}

The waterjet propeller is actuated by the thrust of the motor, which is regulated by its rotating speed
of the range $[600,  11000]$ RPM. The engine is a brushless DC motor driven by a 44.4V electrical battery. The angle of the jetting nozzle within $[-20^\circ,20^\circ]$ is controlled by the steering engine.

% Specifically, when the reversing bucket lays down, the thrust direction of the motor is reversed and hence the USV moves backwards.

The navigation system  consists of a differential GPS navigator, an accelerometer, and a gyro. Specifically, the differential GPS navigator is composed of a receiving board and two antennas with 5Hz frequency bandwidth. {\blue It
detects the velocities and positions of all the USVs as well as the target vessel with positioning accuracy  $\pm2$cm}. The merits of a differential GPS lie in its zero offset and low measuring noise. Meanwhile, the accelerometer and gyro are used to improve the gesture detection response of the GPS navigator. {\blue
Each USV can obtain the information of its neighbors and/or the target
by 433M wireless communication within the range of $125$m.
The control algorithm is executed by STM32F4 series CPUs of STMicroelectronics.}

%Regarding the overall \textit{multi-USVs system}, the positions of the USVs and the target vessel are measured  %by the differential GPS whose station is established on the dock by the Songshan lake.

{\blue First, the coefficients of the model~(\ref{dmodel}) are identified by zigzag tracking experiments, which are
{\red
\EQQ
k_1=-0.098, k_2=0.003, k_3=0.005,
k_4=-0.1055,  \\ k_5=0.019, k_6 = -0.091, k_7 = -0.0175.
\ENN } The data were collected on the present platform shown in Fig.~\ref{fig: multiusvs} with the sampling period of 0.2~s, which is compatible to the differential GPS updating frequency. }The surge, sway and yaw speed identification performances of the model~(\ref{dmodel}) are shown in Fig.~\ref{fig: identimodel}, with root-mean-square errors less than 5\%. The feasibility of the identified model~(\ref{dmodel}) is thus verified.

\begin{figure}[htbp]
\centering
\includegraphics[width=7cm]{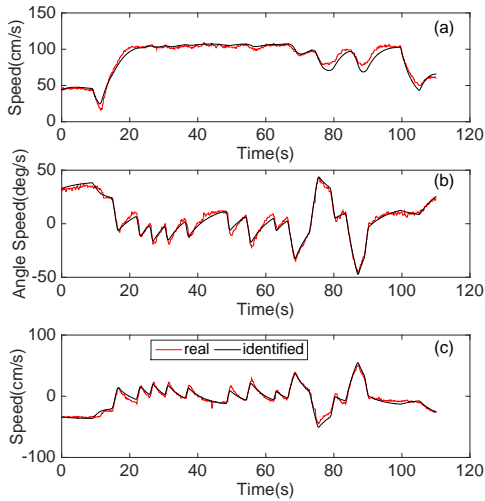}
\caption{Model identification for surge velocity $w_i$ (a),
sway velocity $v_i$ (b), and yaw angular velocity $r_i$ (c).}
\label{fig: identimodel}
\end{figure}

The PID controllers~(\ref{PID1}) and (\ref{eq: u2}) were first tested for a single vessel with the parameters $\kappa_1=0.02$, $\kappa_2=0.001$, $\kappa_3=0.076$, and $\kappa_4=0.418$. The corresponding control performance in terms of orientation and speed is shown in Fig.~\ref{fig: pidangle}. Therein,  the set points are $\omega_i^r= 200$cm/s and $\psi_i^r=300^{\circ}$, respectively. The overshoots/settling times ($\pm 0.5\%$ threshold) of the speed and orientation are $2\%$/10s and $1\%$/15s, respectively, which satisfies the technical requirements of the collective surrounding control.

\begin{figure}[htbp]
\centering
\includegraphics[width=9cm]{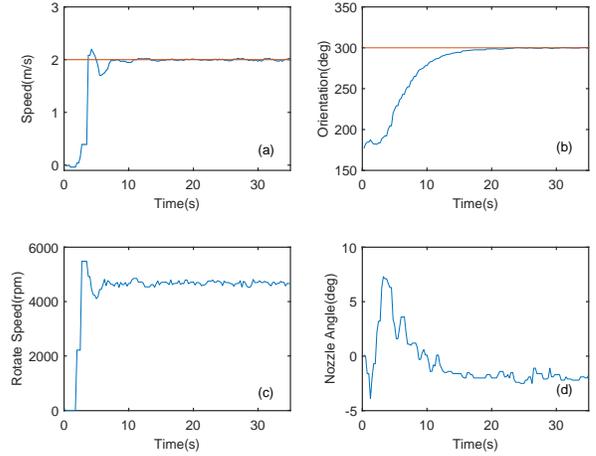}
\caption{The control performances of the single USV speed (subfigure~(a)) and orientation (subfigure~(b)) with the simplified control law  (\ref{PID1})  and (\ref{eq: u2}). The profiles of the control signals, i.e., the propellor rotational speed $\tau_{i,1}$  and the nozzle angle $\tau_{i,2}$ are plotted in subfigures~(c) and~(d), respectively. The red straight lines represent the set-points $\omega_i^r$ and $\phi_i^r$.}
\label{fig: pidangle}
\end{figure}

Upon the lower level controllers, we also conducted collective surround control for the multi-USV system.
The results are shown in Fig.~{\ref{fig: boatrealprocess}}.
%
% In the multi-USV team, only one USV, i.e. the leader, always has access of the position/velocity of the target vessel, while the other vessels are followers who can detect/exchange the position/velocity information of their neighbors by a ZigBee wireless communication network. That is, $\|\mathbb N_1\|=1$ and $\|\mathbb N_2\|=2$. By using the neighboring information, each follower USV calculates its decision value~(\ref{tau1}) and (\ref{tau2}), which regulates the waterjet propeller and the rudder to reach the desirable $\omega_i^r$ and $\psi_i^r$, respectively.
%
Initially,  all the USVs were randomly distributed on a $[40\times40]$~m$^2$ water surface area and began to chase the target vessel.  Within 50 seconds, the USVs caught up the target and began to collectively move around it, using the asymptotically equally surrounding control law~(\ref{controleta}) and (\ref{eq: controlomega})
with the parameters  $\beta_1=0.13$, $\beta_2=0.06$, $\rho_0=10$~m.
In the 160th second, the equally surrounding control mission was fulfilled, that is,  the multi-USVs  captured the target and collectively surround it with fixed distances and evenly-distributed phases.

The collective surrounding procedure is also shown in Fig.~\ref{fig: realline}(a) and (b), in terms of
the USV-target and inter-USV distances,  respectively. It is observed that the three USV-target distances asymptotically converges to $\rho_0$, and the three inter-USV phases asymptotically converge to $120^\circ$. The feasibility of the proposed two-level controller, i.e., the upper level equally surrounding controller~(\ref{controleta}) and (\ref{eq: controlomega}) and  the lower level single vessel controller~(\ref{tau1}) and (\ref{tau2}), is thus verified.
%In the experiment, it is worth mentioning that, the target vessel is a canoe paddled by a student, which guarantees its stationary position.

\begin{figure}[htbp]
\centering
\includegraphics[width=9cm]{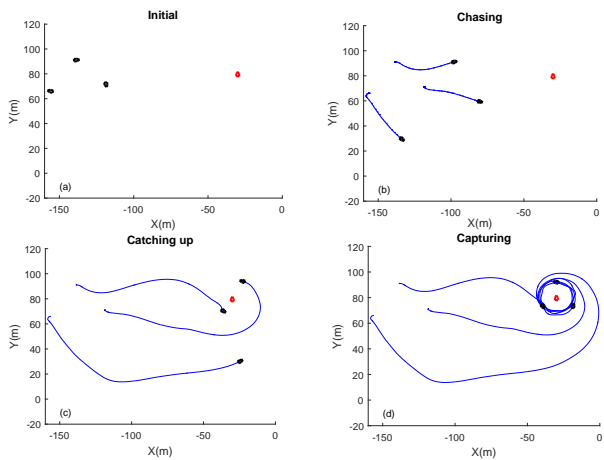}
\caption{The collective surrounding control procedures of the multi-USV system, where the black and red points represent the USVs and the target vessel, respectively; and the blue and red lines represent the moving trajectories of the USVs and the target vessel, respectively.  The procedures include
(a): initial positions;  (b): the target vessel is chased by USVs;  (c): the target vessel is caught up; (d): the target vessel is captured and surrounded.}
\label{fig: boatrealprocess}
\end{figure}

%\begin{figure}[htbp]
%\centering
%\includegraphics[width=8cm]{experiment.eps}
%\caption{The real scene snapshot corresponding to Fig.~\ref{fig: boatrealprocess}(d).}
%\label{fig: experiment}
%\end{figure}

\begin{figure}[htbp]
\centering
\includegraphics[width=8cm]{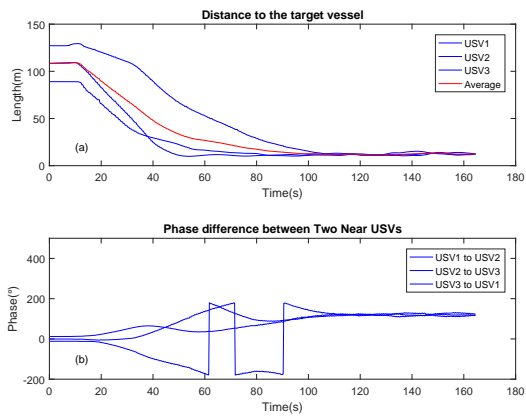}
\caption{ Evolutions of (a) USV-target distances and (b) inter-USV phase. The sudden turns of the curves are due to the phase range of $[-180^\circ,180^\circ)$. Finally, the three inter-USV phases converge to $120^\circ$.  }
\label{fig: realline}
\end{figure}

\section{Conclusions}\label{se: conclusion}
In this paper, a two-level onboard distributed surrounding controller has been developed, which has also been tested on a real multi-USV experimental platform.
 The research has verified the success of the developed hardware and software architectures. {\blue The proposed theoretical framework does not accommodate external disturbance, which is sufficiently satisfactory in the current scenario, as assessed by experiments. However, a more complete theoretical
framework for external disturbance rejection has to be considered in severer environments, which will be the future work.
The current platform was developed in the medium speed mode (1-3m/s) while the issues for high speed mode (8-10m/s) will also be the future work, such as
significant communication package loss.}

\bibliographystyle{ieeetr}
\bibliography{IEEEfull,aaa}
% ---- Bibliography ----
%
%\begin{thebibliography}{0}
%

%\bibitem {as08}
%Ashrafiuon, H., Muske, K. Sliding mode tracking control of surface
%vessels. Proc. American Control Conf., Seattle, Washington, USA,
%2008, pp. 556-561.

%
%\bibitem{br14}
%Bruno J. Guerreiro, Carlos Silvestre, Rita Cunha, and Antonio Pascoal. Trajectory Tracking Nonlinear Model Predictive Control for Autonomous Surface Craft. IEEE TRANSACTIONS ON CONTROL SYSTEMS TECHNOLOGY, VOL. 22, NO. 6, NOVEMBER 2014.

%\bibitem{xia13}
%Guoqing Xia, Ju Liu, Huiyong Wu. Neural Network Based Nonlinear Model Predictive Control for Ship Path Following.  2013 Ninth International Conference on Natural Computation (ICNC).

%\bibitem{liu13}
%Lutao Liu,1Zhilin Liu, and Jun Zhang. Nonlinear Model Predictive Control with Terminal InvariantManifolds for Stabilization of Underactuated Surface Vessel. Hindawi Publishing Corporation. Abstract and Applied Analysis,Volume 2013, Article ID 846389, 8 pages.

%\end{thebibliography}
\end{document}